\def\kiwi{\scalerel*{\includegraphics{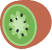}}{\textrm{\textbigcircle}}}
\crefname{prop}{Property}{Properties}
\Crefname{prop}{Property}{Properties}
\crefname{subs}{Subsection}{Subsections}
\Crefname{subs}{Subsection}{Subsections}
\crefname{inequality}{Inequality}{Inequalities}
\Crefname{inequality}{Inequality}{Inequality}
\crefname{summation}{Summation}{Summations}
\Crefname{summation}{Summation}{Summations}
\newcommand{\inputtikz}[1]{%
		\includegraphics{tikz/compiled/#1.pdf}
}
\theoremstyle{plain}
\newtheorem{theorem}{Theorem}
\newtheorem*{theorem*}{Theorem}
\newtheorem*{lemma*}{Lemma}
\newtheorem{corollary}{Corollary}
\theoremstyle{definition}
\newtheorem{definition}{Definition}
\newtheorem{property}{Property}
\newtheorem{example}{Example}
\newcommand{\CodeAnnotation}[4]{
	\hspace*{#1}
	\rlap{
		\smash{
			\raisebox{\dimexpr#2\normalbaselineskip+#2\jot}{
$\left.
\begin{array}
	{@{}c@{}}
	#4
\end{array}
\color{red}
\right\}%
\color{red}
\begin{tabular}{l}
	\hspace*{-1em} \scriptsize #3
\end{tabular}$
			}
		}
	}
}
\newcommand{\floor}[1]{\left\lfloor #1 \right\rfloor}
\newcommand{\bunder}[2]{\underbrace{#1}_{\text{\scriptsize\tabular[t]{@{}c@{}}#2\endtabular}}}
\newcommand{\qmerge}{\oplus}
\newcommand{\qadd}{+}
\newcommand{\defeq}{\overset{\operatorname{def}}{=}}
\newcommand{\bigo}{\mathcal{O}}
\newcommand{\qlimit}{\floor{\frac{n}{k}}}
\newcommand{\compressCost}[1][m]{\ensuremath{\bigo({#1} \log \sigma)}}
\xpatchcmd{\algorithmic}{\setcounter}{\algorithmicfont\setcounter}{}{}
\providecommand{\algorithmicfont}{}
\providecommand{\setalgorithmicfont}[1]{\renewcommand{\algorithmicfont}{#1}}
\begin{document}

\title{Techniques for Authenticating Quantile Digests}
\date{}
\author{Alessandro Scala\thanks{a.scala2@studenti.unipi.it} \orcidlink{0009-0001-4739-0451}}
\affil{\textit{University of Pisa, Department of Computer Science}}

\twocolumn[
  \begin{@twocolumnfalse}
    \maketitle
    \begin{abstract}
		\noindent
		We investigate two possible techniques to authenticate the \emph{q-digest} data structure, along with a worst-case study of the computational complexity both in time and space of the proposed solutions, and considerations on the feasibility of the presented approaches in real-world scenarios. We conclude the discussion by presenting some considerations on the information complexity of the queries in the two proposed approaches, and by presenting some interesting ideas that could be the subject of future studies on the topic.
    \end{abstract}
	\vspace*{2em}
  \end{@twocolumnfalse}
]
\saythanks

\section*{Introduction}
With the proliferation of distributed networks and the increase in the amount of data that needs to be handled by nodes of these networks, and with the rise of networks where different nodes have different computing capabilities, there is a clear need to develop compact data structures that support efficient queries. However, where there is a public network, there is an opening for malicious users to manipulate data to their advantage. In this untrusted setting, we need to devise methods to ensure the integrity of data and the correctness of queries. We therefore investigate the possibility of authenticating q-digests -- a compact data structure introduced in \cite{shrivastava2004medians} -- and queries performed on them, in order to fulfil this need.

The main contribution of this paper is to provide a new theorem for the size bound of q-digests, new algorithms for q-digest compression together with their correctness proofs, and two methods for authenticating the data structure.

The paper is structured as follows: in \cref{sec:background} we recall the general details of q-digests and introduce some terminology and notation. In \cref{sec:compressionIssues} we present an issue related to the correctness of the compression procedure, along with ways to prevent it from happening. In \cref{sec:sizeBoundIssues} we discuss another issue, related to the theoretical bounds we have on the size of a q-digest, and prove a new theorem limiting the growth of a q-digest. We then introduce a first authentication method for q-digests in \cref{sec:wda} and discuss its limitations. This proposal is then improved on with a second authentication method introduced in \cref{sec:kvc-qa}. We discuss the benefits of using one method over the other in \cref{sec:wdavskvcqa}, and conclude in \cref{sec:furtherDevelopments} by discussing possible improvements and research topics for future works.

\section{Background}\label{sec:background}
The \emph{q-digest} data structure has been introduced in \cite{shrivastava2004medians} with the aim of succinctly representing a distribution of integer values in a range $[1,\sigma]$. It does so by collecting equal values inside buckets, and then merging adjacent buckets if the number of elements they contain is too small. This compression has both the benefit of reducing the space footprint of the structure, and of consequently reducing the complexity of queries, making it a very efficient and effective structure for sensor networks, where computational resources are limited.

\noindent The queries supported by the q-digest are:
\begin{itemize}
    \item \textbf{quantile query}: given a real number $q \in [0,1]$, return the smallest value in the distribution that is greater or equal to $qn$ values, where $n$ is the number of values stored in the digest. In other words, return the value that is in position $qn$ in the list obtained by sorting the values of the distribution in increasing order;
    \item \textbf{inverse quantile query}: given a value $x$, return its rank in the sequence sorted as before;
    \item \textbf{range query}: given a range $[l,r]$, return the number of values that fall into this range;
    \item \textbf{consensus query}: given a real number $s \in [0,1]$, return all the values that have multiplicity of at least $sn$.
\end{itemize}
In this paper, we focus on quantile queries, and consequently on range queries as they can be derived from quantile ones.

We now report a few definitions and conventions that will be used throughout this paper.

The \textbf{q-digest tree} of a q-digest $Q$ is the binary partition tree over the domain of $Q$. For simplicity, empty nodes can be interchangeably said to have $count = 0$ or to not have a $count$ associated.
Unless explicitly stated, we will use the term ``nodes'' to refer to nodes of the q-digest tree in general, and ``buckets'' to specifically mean nodes with $count > 0$. We univocally assign \textbf{indices} to all nodes of a q-digest tree, according to a \emph{breadth-first visit}, starting from the index 1 for the root.\\
When viewing q-digests as binary trees, we use $\mathit{b.left}$, $\mathit{b.right}$, $\mathit{b.parent}$ to mean respectively the left child of $b$, its right child, and its parent, and $\mathit{b.cnt}$ for its count.\\
When viewing q-digests as sets of pairs $(index, count)$, we use $Q[i]$ to denote the element of the set with $index = i$.\\
We use $\left|Q\right|$ to indicate the number of buckets of a q-digest $Q$. This is exactly the cardinality of $Q$ when considered as a set of buckets.

We now introduce a function useful for writing some definitions and theorems in a more compact manner:

\begin{definition}[$\nabla$ Function]\label{def:nablaRelation}\;\\
    Let $Q$ be a q-digest, and let $b$ be a node.
    \begin{equation*}
    \nabla_Q(b) \defeq
    \begin{cases}
        b.cnt \hfill \text{if $b$ is the root of the q-digest tree}\\
        b.cnt + b.parent.cnt + b.sibling.cnt \quad \quad \text{o.w.}
    \end{cases}
    \end{equation*}
\end{definition}

This function is different from the $\Delta$ relation used in~\cite{shrivastava2004medians}, as that one goes top-down, i.e.,
$$
    \Delta_b \defeq \mathit{b.cnt + b.left.cnt + b.right.cnt}\,,
$$
while the $\nabla$ function goes bottom-up, and this is reflected in the symbol used to represent it.

We recall the definition of q-digest given in~\cite{shrivastava2004medians}, rewritten using our $\nabla$: 
\begin{definition}[Q-Digest]\label{def:qdigest}
    \;\\
    A subset $Q$ of nodes of a tree is a \emph{q-digest} if and only if
    \begin{gather*}
        \forall b \in Q \setminus \left(\mathit{\left\lbrace Q.root \right\rbrace \cup Q.leaves } \right).
    \end{gather*}
    \vspace*{-2em}
    \begin{align}
        b.cnt &\leq \qlimit \label[prop]{eq:qdigestProperty1}\\
        \nabla_Q(b) &> \qlimit \label[prop]{eq:qdigestProperty2}\,,
    \end{align}
    where $n$ is the number of values stored in the \emph{q-digest}, and $k$ is a compression parameter chosen at the time of creation of the q-digest.\\
    The root has to satisfy \cref{eq:qdigestProperty1}, but can violate \cref{eq:qdigestProperty2}, while the leaves have to satisfy \cref{eq:qdigestProperty2}, but can violate \cref{eq:qdigestProperty1}. Note that empty nodes are not considered in this definition.
\end{definition}
The \emph{compression parameter} $k$ governs the merging of buckets with low count into bigger ones. Specifically, when $k = 1$, all the values will be compressed into the root of the tree, since \cref{eq:qdigestProperty2} can never be satisfied. On the other hand, when $k > n$, \cref{eq:qdigestProperty2} is always verified, and the q-digest degenerates into a list of frequencies in the leaf buckets.

\emph{Summing} two digests (with the same $k$ and $\sigma$) means essentially performing a union of the sets of buckets of the two digests, with the caveat that when two buckets have the same index, only one bucket will be present into the resulting q-digest, with $count$ equal to the sum of the counts of the two source buckets.
\begin{definition}[Q-Digest sum]\label{def:qsum}\;\\
    Let $Q_1$ and $Q_2$ be two q-digests with the same $k$ and $\sigma$. Then:
    \begin{gather*}
        Q_\qadd = Q_1 \qadd Q_2 \,\text{, where}\\
        \forall i\in[1,2\sigma-1].\; Q_\qadd[i] = Q_1[i] + Q_2[i]
    \end{gather*}
\end{definition}
The term $2\sigma-1$ in this definition represents the number of nodes of the full binary partition tree over $[1,\sigma]$. It is trivially verifiable that $n_{Q_1 \qadd Q_2} = n_{Q_1} + n_{Q_2}$.

Note that the sum of two q-digests is \emph{not necessarily a q-digest} itself, since some buckets may violate \cref{eq:qdigestProperty2} in \cref{def:qdigest}, while \cref{eq:qdigestProperty1} is preserved. Indeed, since
\begin{equation*}
	\left(b_{Q_1}.cnt \leq \floor{\frac{n_{Q_1}}{k}}\right) \hspace*{1em} \land \hspace*{1em} \left(b_{Q_2}.cnt \leq \floor{\frac{n_{Q_2}}{k}}\right)\,,
\end{equation*}
we have that
\begin{equation*}
	b_{Q_1}.cnt + b_{Q_2}.cnt \leq \floor{\frac{n_{Q_1} + n_{Q_2}}{k}}
\end{equation*}

To merge two q-digests then it is sufficient to \emph{sum} the two q-digests, and then \emph{compress} them.

We denote with $\qmerge$ the merging operation:
\begin{definition}[Q-Digest merge]\label{def:qmerge}\;\\
$$
	Q_1 \qmerge Q_2 \defeq \textproc{Compress}(Q_1 \qadd Q_2)
$$
\end{definition}
Since we have now recompressed the sum and restored the q-digest properties (in particular \cref{eq:qdigestProperty2}), a merge of two q-digests is itself a q-digest.

This operation is mainly used for two purposes: during the creation of a q-digest, and to combine two q-digests.

\section{Issues with the Original \textproc{Compress} Algorithm}\label{sec:compressionIssues}
The original formulation of the algorithm \textproc{Compress} reported in \cite{shrivastava2004medians} involves performing one single pass from the bottom to the top of the tree. In the case of the creation of a new q-digest, this algorithm does indeed correctly compress the buckets so that the q-digest properties are satisfied. However, in the case of the merge of two q-digests, \emph{in some instances} the procedure might not work as intended, and leave some buckets violating \cref{eq:qdigestProperty2}. This issue is mentioned in \cite{cormode2010methods, gakhov2019probabilistic}, but no solution has been published as of the time of writing, as far as we know. The source of this issue is the parent count included in the property check.

In this section, we will analyse this issue and provide a solution in the form of two new compression algorithms that do not present the issue.

To this aim, we first note that during the creation of a new q-digest with the usual algorithm defined in \cite{shrivastava2004medians}, the following invariant holds true at all times during each step of the algorithm:

\begin{theorem}[Construction Invariant]\label{the:constructionInvariant}\;\\
	If $Q$ is a q-digest constructed by compressing a predefined set of frequencies, then
	\begin{equation}\label{eq:constructionInvariant}
		\forall b \in Q.\;\;\mathit{b.left} \notin Q \land \mathit{b.right} \notin Q\,,
	\end{equation}
	where with \(\mathit{b.left}\) and \(\mathit{b.right}\) we denote respectively the left and right child of the bucket \(b\)
\end{theorem}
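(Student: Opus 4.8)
The plan is to prove the invariant by induction on the steps of the (original, bottom-up) \textproc{Compress} algorithm, tracking the state of the partially compressed structure after each node is processed. The key observation is that the construction algorithm processes nodes level by level from the leaves towards the root, and at each node $b$ it decides whether the combined count $\nabla$-style quantity at $b$ is small enough to be pushed up into $b.parent$; if so, the counts of $b$ and its sibling are moved up and $b$ (and its sibling) are emptied. So the structural claim to maintain is: at any intermediate stage, once we have finished processing a node, that node never simultaneously retains a nonzero count while one of its children also retains a nonzero count.

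First I would set up the induction. The base case is the initial configuration: before any compression, all mass sits in the leaves (the predefined frequencies), so every bucket is a leaf and a leaf has no children in $Q$ — \cref{eq:constructionInvariant} holds vacuously for the only buckets present. For the inductive step, suppose the invariant holds after processing all nodes up to some point, and consider processing the next node $b$ (at the current level). Two things can happen to the counts near $b$ in this step: (i) $b$'s own count may increase, because mass was just pushed up into $b$ from its children when they were processed earlier; (ii) $b$ may push its (possibly augmented) count up into $b.parent$, zeroing out $b$ and its sibling. In case (i), I must check that $b$ ending up with positive count does not violate the invariant at $b$: but its children were precisely emptied when their mass was moved up to $b$ (that is exactly what ``processing a child'' did), so $b.left, b.right \notin Q$ after this step. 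Symmetrically, if $b$ keeps mass and does \emph{not} push up, its parent has not yet been processed and still has zero count (processing is strictly bottom-up), so the invariant at $b.parent$ is also fine. In case (ii), $b$ and its sibling become empty, so any invariant concern at $b$ or at the sibling disappears, and the only newly-nonzero node is $b.parent$, whose children are now $b$ and its sibling, both empty — again consistent.

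The main obstacle, and the place I would spend the most care, is making precise the claim that ``a child is emptied exactly when its mass is pushed to the parent.'' This requires pinning down the exact semantics of the original \textproc{Compress}: that mass only ever moves strictly upward by one level, that the decision at a node is based on the aggregate that has already accumulated there, and that a node is not revisited after its level is done. One subtlety is the sibling: when $b$ is processed and pushes up, its sibling's count is also consumed, so I should verify the algorithm treats the $(b,\text{sibling},\text{parent})$ triple atomically (consistent with the $\nabla$ / $\Delta$ definitions), otherwise the sibling could transiently hold mass together with its own (empty, hence irrelevant) children — still fine, but worth stating. Once the level-by-level, one-step-up movement of mass is nailed down, the rest is bookkeeping: mass can reside at a node only after that node has been filled from below and before it has been drained upward, and in that window its children are necessarily empty, which is exactly \cref{eq:constructionInvariant}. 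I would close by noting that this is the property that fails for \textproc{Compress} applied to a \emph{sum} of two q-digests — there, a node may already carry mass inherited from the summands while its children also carry mass — which is precisely why the original single-pass algorithm is correct for construction but not for merging.
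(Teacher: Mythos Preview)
Your proposal is correct and takes essentially the same approach as the paper's (sketched) proof: both argue that, starting from frequencies placed only at the leaves, an inner node can acquire positive count only by absorbing its two children's counts, which empties those children, and the single bottom-up pass guarantees those children are never repopulated afterward. You make this explicit via a level-by-level induction and a case split on whether a node pushes its mass up, whereas the paper compresses the same reasoning into two sentences; your closing remark on why the argument breaks for a sum of two q-digests is also exactly the observation the paper exploits in the surrounding discussion.
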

\begin{proof}
	(Sketch) If $Q$ has been constructed using the compression procedure in \cite{shrivastava2004medians}, then every bucket $b$ is either a leaf, meaning it has no children, or it is an inner node that has been added by compressing its two children, which are now empty.
	Because the \textproc{Compress} algorithm used during its creation iterates only once on all buckets, starting from the bottom, it is not possible that a new child is added after the creation of $b$.
\end{proof}
\begin{corollary}\label{cor:constructionInvariantCorollary}\;\\
	If the \emph{construction invariant} holds, then
	\begin{gather*}
		\forall\, \mathit{buckets}\; b \in Q.\;\;\mathit{b.parent} \notin Q
	\end{gather*}
\end{corollary}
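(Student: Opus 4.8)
The plan is to derive the corollary directly from the construction invariant (\cref{eq:constructionInvariant}) by a short contrapositive argument. Suppose, for contradiction, that there is a bucket $b \in Q$ whose parent $p = \mathit{b.parent}$ is also in $Q$. Since $b$ is a child of $p$, we have either $b = \mathit{p.left}$ or $b = \mathit{p.right}$. In the first case $\mathit{p.left} = b \in Q$, and in the second case $\mathit{p.right} = b \in Q$; either way, $\mathit{p.left} \in Q \lor \mathit{p.right} \in Q$. But $p \in Q$, so applying the construction invariant to the bucket $p$ yields $\mathit{p.left} \notin Q \land \mathit{p.right} \notin Q$, a contradiction. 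Hence no such $b$ exists, which is exactly the claim $\forall\, \mathit{buckets}\; b \in Q.\; \mathit{b.parent} \notin Q$.

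A small technical point I would address is the degenerate case in which $b$ is the root of the q-digest tree: then $\mathit{b.parent}$ is undefined, and the statement should be read as vacuously true for that $b$ (or one simply notes $\mathit{b.parent} \notin Q$ because it does not exist as a node at all). Likewise, I should make sure the quantifier in the invariant ranges over \emph{buckets}, so that when I instantiate it at $p$ I am entitled to conclude $\mathit{p.left} \notin Q \land \mathit{p.right} \notin Q$; this is fine since $p \in Q$ and, under our convention, elements of $Q$ are buckets.

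I do not expect any real obstacle here — the corollary is essentially a restatement of the invariant ``read upwards'' rather than ``downwards'', and the only thing to be careful about is matching the direction of the parent/child relation correctly and handling the root as a boundary case. The one-line intuition I would also include is: if $b$'s parent were present, then that parent would be a bucket with a child ($b$ itself) in $Q$, directly violating \cref{eq:constructionInvariant}.
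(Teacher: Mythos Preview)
Your proposal is correct; the paper states the corollary without proof, treating it as an immediate consequence of the construction invariant, and your contrapositive argument is exactly the natural one-line justification one would supply. Your handling of the root as a vacuous boundary case is also appropriate.
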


If we can guarantee that this holds true, then we can guarantee that the \textproc{Compress} algorithm returns a properly compressed q-digest that satisfies \cref{def:qdigest}.

Unfortunately, not all q-digest operations preserve the invariant. In particular, it is not preserved by sum or compression of a generic digest, and consequently it is not preserved by the merge operation, as shown in the following example.

\begin{example}\;\\
Let us discuss an example where the compression problem occurs. Suppose we have two digests $Q_1$ and $Q_2$, shown in \cref{fig:wrongCompress}, with $k = 4, n_1 = 38, n_2 = 36$, obtained respectively from the frequency sets (i.e., sets of tuples $(value, multiplicity)$)
\begin{align*}
	S_1 &= \{(1,1), (2,2), (3,3), (4,4), (5,6), (6,6), (7,7), (8,9)\}\\
	S_2 &= \{(1,8), (2,7), (3,6), (4,5), (5,4), (6,3), (7,2), (8,1)\} \,.
\end{align*}
Compressing these sets, we obtain the following \mbox{q-digests} (reminding we note them as sets of tuples $(index, count)$):
\begin{align*}
	Q_1 &= \{(4, 3), (5, 7), (12, 6), (13, 6), (14, 7), (15, 9)\}\\
	Q_2 &=  \{(6, 7), (7, 3), (8, 8), (9, 7), (10, 6), (11, 5)\}\\
	Q_1 \qadd Q_2 &= \{(4, 3), (5, 7), (6, 7), (7, 3), (8, 8), (9, 7), (10, 6)\\
	&\quad\;\;\;(11, 5), (12, 6), (13, 6), (14, 7), (15, 9)\}\\
	Q_1 \qmerge Q_2 &= \{(1, 10), (4, 18), (5, 18),(12, 6), \\
	&\quad\;\;\;(13, 6), (14, 7), (15, 9)\}
\end{align*}

Now, this last object \emph{does not verify} \cref{eq:qdigestProperty2} as, for example, it is not verified for node $15$, in fact:
\begin{equation*}
	\nabla_{Q_1 \qmerge Q_2}(15) =
	9 + 7 + 0 =
	16
	\quad\ngtr\quad
	18 =
	\bigg\lfloor \frac{74}{4} \bigg\rfloor =
	\bigg\lfloor \frac{n}{k} \bigg\rfloor
\end{equation*}

We can quickly verify with the example we just provided that the resulting object is not a q-digest even by choosing a different value for \(k\). If we choose for example $k'=7$, so that
\begin{equation*}
	\nabla_{Q_1 \qmerge Q_2}(13) = 6 + 6 + 0 = 12
	\quad>\quad
	10 = \floor{\frac{74}{7}} = \floor{\frac{n}{k'}} \,,
\end{equation*}
we notice that now
\begin{equation*}	
	Q[4].cnt = 18
	\quad>\quad
	10 = \floor{\frac{n}{k'}} \,,
\end{equation*}
meaning that now \cref{eq:qdigestProperty1} is not verified anymore by the bucket with index $4$. In fact, \emph{there is no choice of $k$} that makes both properties satisfied for all buckets.\\

The reason why this compression problem happens is that, by proceeding bottom-up, a merge at an upper level might invalidate checks made at the level directly below. By looking at the sum of the two q-digests in \cref{fig:wrongCompress}, we can in fact see that the property would be satisfied before compression, as we would have 
\begin{equation*}
	\nabla_{Q_1 \qadd Q_2}(15) =
	9+7+3 =
	19
	\quad>\quad
	18 =
	\qlimit
\end{equation*}

However, once we proceed to the upper level, we check the property for the parent node, $7$, and find
\begin{equation*}
	\nabla_{Q_1 \qadd Q_2}(7) =
	3+7+0 =
	10
	\quad<\quad
	18 =
	\qlimit
\end{equation*}

Thus, node $7$ needs to be merged with node $6$ into their parent $3$. This step, though, makes it so that $\nabla_{Q_1 \qadd Q_2}(15)$ has now changed to $16$, as the parent's count has changed to 0, invalidating the check performed previously.
\end{example}

\begin{figure}
	\centering
	\resizebox*{0.45\textwidth}{!}{
	\inputtikz{qd_ci_wrongCompress-1}
}
\;\\
\;\\
\resizebox*{1cm}{!}{
	\begin{tikzpicture}
		\node[x=0, y=0, style={font=\Huge}] (A) {$\qadd$};
	\end{tikzpicture}
}
\;\\
\;\\
\resizebox*{0.45\textwidth}{!}{
	\inputtikz{qd_ci_wrongCompress-2}
}

\hfill
\;\\
\begin{tikzpicture}
	\node[x=0, y=0] (A) {};
	\draw[to-] (0.2,-0.7) -- (0.2,0.3);
\end{tikzpicture}
\;\\
\;\\

\resizebox*{0.45\textwidth}{!}{
	\inputtikz{qd_ci_wrongCompress-3}
}

\hfill
\;\\
\begin{tikzpicture}
	\node[x=0, y=0] (A) {};
	\draw[to-, transform canvas={xshift = 0.13cm}] (0.2,-0.7) -- (0.2,0.3) node[midway, fill=white, anchor=center] {\textproc{Compress}};
\end{tikzpicture}
\;\\
\;\\

\resizebox*{0.45\textwidth}{!}{
	\inputtikz{qd_ci_wrongCompress-4}
}
	\caption{Compression during merging of two q-digests may lead to errors. Each node is labelled with its index next to it. The two starting digests have both $k=4$, and respectively $n=38$ and $n=36$. The resulting digest has $n=74$, and consequently $\qlimit = 18$.}
	\label{fig:wrongCompress}
\end{figure}

\paragraph{Consequences of the incorrectness of the compression algorithm}\;\\
\indent The object returned by this merging procedure when this problem happens is not strictly speaking a q-digest anymore, because, as we stated above, \cref{eq:qdigestProperty2} is violated for some buckets. Despite this violation, queries defined for regular q-digests still do work, and in fact, by restoring \cref{eq:qdigestProperty2}, the responses to queries on the q-digest would lose accuracy, as restoring the property involves compressing some buckets, causing loss of information.
However, something that has to be kept in mind are the consequences of the violation of \cref{eq:qdigestProperty2}: having this operation violate \cref{eq:qdigestProperty2} would mean that $\qmerge$ is not an internal operation on the set of q-digest, and thus we lose many guarantees on the result. Particularly, the validity of this property is a premise of the \emph{q-digest size bound theorem}, which would not hold true anymore. This implies that we would have no theoretical bound to the size of the resulting object, and in turn no bound for the time needed to execute queries. This is especially a problem in the context the structure was designed for: distributed sensor networks, where having a known limit to both space and time complexity is crucial in order to achieve the best performance (and precision) possible.

Of the actual q-digest implementations tested \cite{qdigestcpp, qdigestjava}, we found out that the first one \cite{qdigestcpp} uses the original \textproc{Compress} algorithm (albeit calling it \textproc{compact}), however calling it only when the size of the structure grows above the $3k$ limit, which corresponds to the original size bound. The second one \cite{qdigestjava}, instead has two versions of the algorithm:
\begin{itemize}
	\item \textproc{compressUpward} is a faster version of compression, it only verifies \cref{eq:qdigestProperty2} along the path from the modified leaf to the root. The property can therefore be violated in some buckets.
	\item \textproc{compressDownward} is a complete and correct implementation of the q-digest compression, that guarantees that after its execution \cref{eq:qdigestProperty2} holds true for all buckets. This procedure is called only when the size of the structure has grown bigger than $3k$, or when invasive operations (such as a merge) happen.
\end{itemize}

This second implementation \cite{qdigestjava} only allows to insert a value (i.e., increase the count of a leaf) into the q-digest, and compresses it every time this happens. It is therefore not possible to construct a q-digest starting from a known set of frequencies, if not by repeatedly inserting values multiple times. This, however, causes the digest to tend to be less accurate than one constructed instantaneously from a set of frequencies, as more compression happens. Another consequence of this construction is that the order in which the values are inserted directly affects the shape of the resulting q-digest, and in turn its accuracy.

This kind of construction also does not verify the \emph{construction invariant} described in \cref{the:constructionInvariant}.

We now proceed by proposing two new compression algorithms and proving that they correctly restore the q-digest properties.

\subsection{Recursive Compression}
The first algorithm that we propose proceeds by recursively performing compression on subtrees of the q-digest.

We have seen that the compression error occurs when a bucket that has non-empty children is merged into its father. Consequently, an attempt at solving this issue could be going down a level whenever a parent bucket is compressed. An easy way to express this is:
\begin{enumerate}
	\item run a first compression on the two subtrees rooted in the children of the node;
	\item merge the two children in the node, if needed;
	\item if a merge has happened in the previous step, apply a second compression on the two subtrees.
\end{enumerate}
This is illustrated in \cref{alg:recursiveCompress}.

\setalgorithmicfont{\footnotesize}
\begin{algorithm}
\caption{Recursive Compress}
\label{alg:recursiveCompress}
\begin{algorithmic}[1]
	\Function{RecursiveCompress}{$Q$: q-digest, $b$: root}
		\If{$b$ is a leaf}
			\State \Return
		\EndIf
		\State \Call{RecursiveCompress}{$Q$, b.left}
		\State \Call{RecursiveCompress}{$Q$, b.right}
		\If{$b.cnt + \mathit{b.left.cnt} + b.right.cnt \leq \qlimit$}
			\State $b.cnt \gets b.cnt + \mathit{b.left.cnt} + b.right.cnt$
			\State $\mathit{b.left.cnt} \gets 0$
			\State $\mathit{b.right.cnt} \gets 0$
			\State \Call{RecursiveCompress}{$Q$, b.left}
			\State \Call{RecursiveCompress}{$Q$, b.right}
		\EndIf
	\EndFunction
\end{algorithmic}
\end{algorithm}

\subsubsection{Complexity analysis}\;\\
\indent To study the complexity of this algorithm, we note that it is a kind of recursive \emph{divide and conquer} algorithm, that in the worst case executes four recursive calls per step, each time with an input that is half of the size of the starting problem. Remembering that a q-digest tree is a binary partition tree of the domain of the q-digest, the size of the problem in this analysis is the width of the range partitioned by a tree or, equivalently, the number of leaves. The cost of the creation of the subproblems and of the combination of the results are $\bigo(1)$, and are therefore dominated by the cost of the recursive calls.

In these conditions, we can apply the first case of the \emph{master theorem for divide and conquer recurrences}~\cite{bentley1980general, cormen2001introduction} and obtain:
\begin{align*}
	T(\sigma) \leq 4T\left(\frac{\sigma}{2}\right) + \bigo(1)= \bigo(\sigma^2)
\end{align*}

Note that this cost is substantially higher than the one of the original procedure -- that is \compressCost --, and that it does not depend on $m$, the number of distinct values, but only on the size of the domain.

We finish by underlining that this analysis assumes that every call of the algorithm will result into four recursive calls, which is a highly unlikely, if at all possible, scenario, and that the real cost of this operation will probably be much lower. Furthermore, it would be possible and more efficient to execute a smarter version of the algorithm that runs not on the whole q-digest tree, but only on the buckets and the nodes that are modified during compression.

\subsubsection{Correctness}
We prove the correctness of the algorithm with the following theorem:
\begin{theorem}[\textproc{RecursiveCompress} Correctness]\label{the:recCompressionCorrectness}\;\\
	The \textproc{RecursiveCompress} algorithm correctly restores \cref{eq:qdigestProperty2} on all buckets of the q-digest, except, possibly, the root.
\end{theorem}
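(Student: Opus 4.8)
The plan is to prove, by induction on the height of the subtree, a local statement about what a single call does: for every node $b$, once \textproc{RecursiveCompress}$(Q,b)$ has returned, every bucket $c$ in the subtree rooted at $b$ with $c \neq b$ satisfies \cref{eq:qdigestProperty2}, i.e.\ $\nabla_Q(c) > \qlimit$. Call this statement $P(b)$; the theorem is then the instance of $P$ at the root of the q-digest tree. This reduction is legitimate because $\nabla_Q(c)$ is a local quantity: for a non-root bucket $c$ with parent $p$ it equals $c.cnt + p.cnt + c.sibling.cnt = \Delta_p$, and when $c \neq b$ all three of those nodes lie in the subtree rooted at $b$, so $\nabla_Q(c)$ is unaffected by anything outside it. Two preliminaries come first. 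Termination: each recursive call is on a strictly shorter subtree, so the same induction gives it. Constancy of the threshold: every modification the procedure makes either adds a child's count to its parent or zeroes a node whose count has just been moved into its parent, so the total count on any subtree — in particular the global $n$ — is preserved, and $\qlimit$ is a fixed constant throughout the run.

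For the inductive step, let $b$ be internal with children $\ell$ and $r$; by the induction hypothesis the two opening recursive calls establish $P(\ell)$ and $P(r)$. If at that point $\Delta_b = b.cnt + \ell.cnt + r.cnt > \qlimit$, the procedure stops, and $P(b)$ is immediate: deeper buckets are handled by $P(\ell)$ and $P(r)$, and for $\ell$ and $r$ themselves $\nabla_Q(\ell) = \nabla_Q(r) = \Delta_b > \qlimit$. The substantive case is $\Delta_b \leq \qlimit$. Then $\ell.cnt$ and $r.cnt$ are folded into $b.cnt$ (so the new $b.cnt$ is at least the value $v$ that $\ell.cnt$ held just before, and likewise at least the old $r.cnt$), $\ell$ and $r$ are emptied, and \textproc{RecursiveCompress} is re-run on each. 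By the induction hypothesis these re-runs re-establish $P(\ell)$ and $P(r)$, so deeper buckets are fine, and all that remains is: if a re-run leaves $\ell$ (resp.\ $r$) nonempty, then $\nabla_Q(\ell) > \qlimit$ (resp.\ the analogous statement for $r$). Since $\ell$ and $r$ head disjoint subtrees the two checks are independent and symmetric, so I treat $\ell$.

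The tool for this is a \emph{stability lemma}: if a subtree already satisfies $\Delta_x > \qlimit$ at every node $x$ possessing a nonempty child, then \textproc{RecursiveCompress} on it changes nothing — a short induction, since at such nodes the merge test fails and at every other node the merge adds zero. Immediately after the opening call on $\ell$, $P(\ell)$ is precisely this hypothesis for the subtrees of $\ell$'s two children, so the opening recursions of the re-run on $\ell$ leave those children's counts at the values $a$ and $a'$ they held after the first call; with $\ell.cnt$ reset to $0$, the re-run then either finds $a+a' > \qlimit$ and does nothing (leaving $\ell$ empty), or sets $\ell.cnt \gets a+a'$ and leaves it there. In the second case, if $a+a' > 0$ then one of $\ell$'s children was a bucket just after the first call, so $P(\ell)$ at that moment gives $a+a'+v > \qlimit$; combining this with the facts that the new $b.cnt$ is at least $v$ and the new $r.cnt$ is nonnegative yields $\nabla_Q(\ell) = \ell.cnt + b.cnt + r.cnt \geq (a+a')+v > \qlimit$, as required. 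I expect the one delicate point to be exactly this bookkeeping — stating and proving the stability lemma, and using it to argue that the counts the re-run sees are precisely the ones the first call left behind — while the rest is routine case analysis.
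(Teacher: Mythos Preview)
Your proof is correct and follows essentially the same inductive scheme as the paper: induct on the height of the subtree, and in the merging case show that if the child $\ell$ ends up nonempty after the re-run then $\nabla_Q(\ell)$ dominates the old $\Delta_\ell = v + a + a'$, which exceeds $\qlimit$ by the first-pass hypothesis. The paper's argument is the same chain of inequalities with the same bookkeeping (its $a,b,c,d,e$ are your $b,\ell,r$ and $\ell$'s two children).

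The one place you go beyond the paper is your \emph{stability lemma}. The paper simply asserts that after the re-run one has $b.cnt = d_0.cnt + e_0.cnt$ (in your notation, $\ell.cnt = a + a'$), without explaining why the opening recursions of the re-run leave the grandchildren's counts at the values $a,a'$ produced by the first pass. Your stability lemma is exactly the missing justification: since $P(\ell)$ already holds after the first call, every node with a nonempty child in $\ell$'s subtree has $\Delta$ above the threshold, so the inner recursions are no-ops. This is a genuine gap in the paper's write-up that your version closes; otherwise the two arguments are the same.
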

\begin{proof} First, we note that restoring \cref{eq:qdigestProperty2} on the root is not needed to satisfy \cref{def:qdigest}.\\
	We prove this theorem by induction:\\
	We have two base cases: in the case of a tree made up by only one node, the theorem is trivially true, as the only node is the root, and the theorem states that on it the property can either be satisfied or not.\\
	The second base case is when the algorithm is executed on a tree made up by three nodes: the root $a$ and its children $b$ and $c$. We have two subcases:
	\begin{itemize}
		\item $\Delta_a > \qlimit$, in which case it also holds true that \mbox{$\nabla_Q(b) = \nabla_Q(c) = \Delta_a > \qlimit$}.
		\item $\Delta_a \leq \qlimit$, in which case we update $a.cnt$ with $\Delta_a$. Now $b.cnt = c.cnt = 0$, which means they are no longer buckets, and the theorem is true.
	\end{itemize}

	Now for the inductive step, we need to consider the top seven nodes of the tree:

	\begin{figure}[H]
		\centering
		\inputtikz{qd_ci_inductiveTree}
	\end{figure}

	Initially, we execute a first compression on the two subtrees rooted in $b$ and $c$, which by inductive hypothesis are now compressed. We again have two cases:
	\begin{itemize}
		\item $\Delta_a = \nabla_Q(b) = \nabla_Q(c) > \qlimit$, in which case the theorem is true.
		\item $\Delta_a \leq \qlimit$, so we need to update the counter \mbox{$a.cnt \gets a_0.cnt + b_0.cnt + c_0.cnt$} and execute a second compression on the two subtrees. We denote with a subscript 0 the nodes as they were immediately after the first compression. Contrarily to the base case, we now still need to prove that the property holds for $b$ and $c$, as they might not be empty, and because they were roots in the recompression, the inductive hypothesis does not guarantee that the property holds. We prove it for $b$, and symmetrically it can be proved for $c$. This case itself is then split into two subcases:
		\begin{itemize}[leftmargin=1.5em]
			\item $b.cnt = 0$. In this case the property is restored, as empty nodes are not considered for compression.
			\item $b.cnt \neq 0$. In particular, in this case we have \mbox{$b.cnt = d_0.cnt + e_0.cnt$}, because the old $b_0.cnt$ has been moved up a level and is now part of $a.cnt$.
			
			Since $b.cnt \neq 0$ by hypothesis, this means that \mbox{$d_0.cnt \neq 0 \lor e_0.cnt \neq 0$} and consequently \mbox{$\Delta_{b_0} = b_0.cnt + d_0.cnt + e_0.cnt > \qlimit$} by inductive hypothesis.
			
			Since \mbox{$a.cnt = a_0.cnt + b_0.cnt + c_0.cnt$}, it follows that 
		\end{itemize}
	\end{itemize}
	{\small \begin{equation*}
		\nabla_Q(b) \geq \bunder{d_0.cnt + e_0.cnt}{$b.cnt$} + \bunder{a_0.cnt + b_0.cnt + c_0.cnt}{$a.cnt$} > \qlimit\,.
	\end{equation*}}

\end{proof}

\subsection{Iterative Compression}
Alternatively to the recursive approach, we describe an iterative algorithm which simply calls the original \textproc{Compress} algorithm multiple times, until a fixpoint is reached, and no bucket merge happens anymore. At this point we are guaranteed that \cref{eq:qdigestProperty2} holds true for all buckets, otherwise they would have been merged up during one of the iterations. This is illustrated in \cref{alg:iterativeCompress}.

\setalgorithmicfont{\footnotesize}
\begin{algorithm}
\caption{IterativeCompress}
\label{alg:iterativeCompress}
\begin{algorithmic}[1]
	\Function{IterativeCompress}{$Q$: q-digest}
		\Do
			\State \Call{Compress}{$Q$}
		\doWhile at least one merge has occurred in the last iteration
	\EndFunction
\end{algorithmic}
\end{algorithm}

\subsubsection{Complexity analysis}
To find a time complexity bound for this algorithm, we make the assumption that we are working on a structure that is the result of the \emph{sum of two q-digests}, which is perfectly reasonable if this procedure is used as part of a merge algorithm. In these conditions, the number of buckets of the structure is still $\bigo(k)$, as each of the source q-digests individually has $\bigo(k)$ buckets.

The analysis then is based on the observation that any leaf bucket can be merged up at most $\log \sigma$ times (the height of the q-digest tree). Therefore, by the \emph{pidgeonhole principle}, the maximum number of iterations will be $\bigo(k \log \sigma)$, if only one leaf goes up only one level during each iteration (which is an unrealistic scenario, but this simplifies the upper bound analysis). The cost of the algorithm is therefore $\bigo(k^2 \log^2 \sigma)$. This upper bound is quadratic with respect to the original compression, however note that in reality the upper bound will practically never be reached, as a single compression pass will not move only one leaf upwards. Most importantly, this algorithm is guaranteed to return a correctly compressed q-digest.

Note that if this algorithm is employed during the creation of a q-digest, its complexity drops to that of the original \textproc{Compress} algorithm, \linebreak\compressCost, with $m$ being the number of distinct values, as there will only be two iterations, the first one will compress the digest, and the second one will produce no change, determining the termination of the algorithm.

\subsubsection{Correctness}
\begin{theorem}[\textproc{IterativeCompress} Correctness]\label{the:iterCompressCorrectness}\;\\
	The \textproc{IterativeCompress} algorithm correctly restores \cref{eq:qdigestProperty2} on all buckets of the q-digest (except, possibly, the root).
\end{theorem}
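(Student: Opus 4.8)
The plan is to prove two things in turn: that the \textbf{do}--\textbf{while} loop halts, and that the structure it returns satisfies \cref{eq:qdigestProperty2} on every non-root bucket. For termination I would use a potential argument: set $\Phi(Q) \defeq \sum_{b} b.cnt \cdot \mathrm{depth}(b)$, summing over all nodes with the root at depth $0$. This is a non-negative integer, bounded above by $n\log\sigma$ since the tree has height $\log\sigma$ and the counts sum to $n$, a quantity left invariant by \textproc{Compress}. Each actual bucket merge inside a \textproc{Compress} pass moves a strictly positive amount of count from two nodes at some depth $d$ up to their parent at depth $d-1$, and a short calculation shows this lowers $\Phi$ by exactly the amount moved, hence by at least $1$. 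So across all iterations \textproc{Compress} can perform at most $n\log\sigma$ merges, and eventually a pass occurs with no merge at all, at which point the loop exits. (The weaker $\bigo(k\log\sigma)$ iteration bound from the complexity analysis would do equally well.)

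For correctness, I would argue about the terminal structure $Q^\star$. By the exit condition it is the output of a \textproc{Compress} pass in which no merge happened; since a merge is the only thing \textproc{Compress} ever changes, that pass was a no-op, so $Q^\star$ equals the structure it was run on and stays fixed throughout the pass. Now take any non-root bucket $b$ of $Q^\star$ and let $p \defeq \mathit{b.parent}$. Because $b$ is a bucket, $p$ is an internal node on the path from a bucket to the root, so the bottom-up sweep of that last pass examines $p$ and evaluates the merge test $\Delta_p \le \qlimit$; as the structure is frozen during the pass, the test is effectively evaluated on $Q^\star$ itself, and since no merge results, it must be false, i.e.\ $\Delta_p > \qlimit$ in $Q^\star$. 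Finally, the two children of $p$ are exactly $b$ and $\mathit{b.sibling}$, so
\[
\nabla_{Q^\star}(b) = b.cnt + \mathit{p.cnt} + \mathit{b.sibling.cnt} = \Delta_p > \qlimit\,,
\]
which is \cref{eq:qdigestProperty2} for $b$; as $b$ was an arbitrary non-root bucket, the theorem follows.

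The step I expect to be the most delicate is pinning down precisely what a \textproc{Compress} pass does, since the whole argument hinges on it: it must be a single bottom-up sweep that visits every internal node lying above a bucket and merges at $p$ exactly when $\Delta_p \le \qlimit$ and $p$ has a non-empty child. Two edge cases need a word. First, the degenerate case $\Delta_p = 0$: then $p$ has no non-empty child, hence no non-root bucket sits below it, so it is irrelevant to the statement. Second, one must insist that a no-op ``merge'' of empty nodes does not count as a merge, or the loop would never terminate. It is also worth recording that \cref{eq:qdigestProperty1} survives all of this -- a merge sets $\mathit{p.cnt} \gets \Delta_p \le \qlimit$ -- and, combined with the fact from \cref{sec:compressionIssues} that \cref{eq:qdigestProperty1} holds for a sum of q-digests, this shows $Q^\star$ is an honest q-digest rather than merely a structure obeying \cref{eq:qdigestProperty2}.
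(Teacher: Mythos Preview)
Your proposal is correct and follows essentially the same core argument as the paper: the termination condition (a \textproc{Compress} pass with no merge) forces $\Delta_p > \qlimit$ at every parent $p$ of a non-root bucket, which is exactly $\nabla_Q(b) > \qlimit$. The paper phrases this as a one-paragraph proof by contradiction (``if a bucket violated \cref{eq:qdigestProperty2}, the last pass would have merged it''), whereas you unfold it directly and more carefully.

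Where your write-up genuinely adds content is termination: the paper's correctness proof simply assumes the loop halts and defers the bound to the preceding complexity discussion (a pigeonhole argument giving $\bigo(k\log\sigma)$ iterations), while you give a self-contained potential argument via $\Phi(Q) = \sum_b b.cnt \cdot \mathrm{depth}(b)$, which is cleaner and does not rely on the input being a sum of two q-digests. Your explicit handling of the edge cases (vacuous merges at empty triples, preservation of \cref{eq:qdigestProperty1}) is also absent from the paper's sketch. So: same idea, but your version is the fleshed-out proof the paper only gestures at.
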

\begin{proof} (Sketch)
	The correctness of this algorithm, as anticipated earlier, lies in the fact that no compressions have happened in the last iteration.

	Let us assume that, after the algorithm stopped running, $Q$ contains a bucket that does not satisfy \cref{eq:qdigestProperty2}.
	Since the algorithm stopped, we know that the last iteration did not compress any bucket.
	However, we know this to be impossible, as an iteration of \textproc{Compress} restores \cref{eq:qdigestProperty2} for all buckets and then possibly reintroduces some violation, but in any case if there is a bucket for which the property is not verified, then a merge \emph{will} happen. Therefore, we obtain a contradiction, and we know that the hypothesis that \cref{eq:qdigestProperty2} is violated is false. From this, the thesis follows, and that concludes the proof.
\renewcommand{\qedsymbol}{\Lightning}
\end{proof}

\section{Issues with the Original Size Bound Theorem}\label{sec:sizeBoundIssues}

\paragraph{Q-Digest Size Bound}\;\\
\indent An important result on the compression parameter $k$ is how it affects the size of a q-digest. We recall the original bound from Lemma~1 in \cite{shrivastava2004medians} with its proof, rewritten using our notation:

\begin{lemma*}[Size Bound]\;\\
	A q-digest $Q$ with compression parameter $k$ has size $\left|Q\right|$ less than $3k$.
\end{lemma*}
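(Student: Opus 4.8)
The plan is to bound the number of buckets by splitting them into two groups according to which q-digest property ``pins them down'': the buckets constrained from above by \cref{eq:qdigestProperty1}, and the buckets constrained from below by \cref{eq:qdigestProperty2}. First I would observe that every bucket except the root and the leaves must satisfy both properties; so it suffices to bound these ``internal'' buckets and then add a small additive term for the root and the leaves. For the internal buckets, I would partition them by level: call a level \emph{heavy} if the bucket counts on it are individually small (close to the $\floor{n/k}$ threshold), and use the fact that the counts over any one level sum to at most $n$. Combining ``each count is at most $\floor{n/k}$'' with ``the counts add up to at most $n$'' would give at most roughly $k$ buckets that are pinned by the upper-bound property alone per relevant level — but that is too weak across all $\log\sigma$ levels, so the real leverage has to come from \cref{eq:qdigestProperty2}.

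The sharper argument I would use exploits \cref{eq:qdigestProperty2} via a charging scheme on sibling pairs. For each internal bucket $b$, the quantity $\nabla_Q(b) = b.cnt + b.parent.cnt + b.sibling.cnt$ exceeds $\floor{n/k}$. I would group the internal buckets into sibling pairs (a bucket and its sibling, when the sibling is also a bucket) and singletons, and then assign to each such group the ``parent-plus-pair'' mass $b.parent.cnt + b.cnt + b.sibling.cnt$. Walking down the tree, each node's count gets counted as the ``parent'' contribution for at most one pair of children and as the ``self/sibling'' contribution for at most one group at its own level, so the total mass charged is $\bigo(n)$. Since each group of internal buckets carries mass strictly greater than $\floor{n/k}$, the number of groups is at most about $k$, hence the number of internal buckets is at most about $2k$. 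Adding the root (one bucket) and the leaves — which, crucially, can be bounded because a leaf bucket forces \cref{eq:qdigestProperty2} and there are only $\sigma$ leaves but their contributing mass is again globally $\bigo(n)$ — would yield the final $|Q| < 3k$.

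The main obstacle, and the step I expect to need the most care, is making the charging scheme tight enough to land exactly at the constant $3$ rather than some larger constant: the naive double-counting of the ``parent count'' across its two children, and of a count appearing in both its own level's bound and its children's $\nabla$, tends to inflate the constant. I would handle this by being precise about which nodes can simultaneously be a ``parent with two bucket-children'' and themselves a bucket (by \cref{cor:constructionInvariantCorollary} this is heavily restricted in the construction case, though for a general q-digest one must argue directly), and by noting that the $\floor{n/k}$ in \cref{eq:qdigestProperty2} is a strict inequality, which lets me round the count of groups down to $k-1$ or so and absorb the root and leaf contributions into the remaining slack. The leaves require a separate, careful sub-argument since they need not satisfy \cref{eq:qdigestProperty1}, so their individual counts are unbounded; the point is that there are at most as many leaf buckets as internal buckets one level up that are forced by \cref{eq:qdigestProperty2}, so they fold into the same count.
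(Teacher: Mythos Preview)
The proposal cannot succeed because the statement is \emph{false}. The paper reproduces this lemma and its original proof from \cite{shrivastava2004medians} precisely in order to refute it: immediately after the proof (which is marked with a cancelled QED symbol), the paper exhibits an explicit q-digest with $\sigma = 64$, $n = 22$, $k = 4$ that has $13$ buckets, strictly exceeding $3k = 12$. So no charging scheme, however carefully tuned, can land at the constant $3$.

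Your argument breaks at the same place the original one does. When you write that ``each node's count gets counted as the `parent' contribution for at most one pair of children,'' that is not right: a node with two bucket children is charged as the parent in \emph{each} child's $\nabla$, i.e.\ twice, and it is also charged once more in its own sibling-pair group. That is up to four appearances per count, not three, and this is exactly the flaw the paper identifies in \cref{eq:upperBoundPaper}. You anticipate the issue and reach for \cref{cor:constructionInvariantCorollary}, but correctly note that it only applies under the construction invariant; for a general q-digest (and in particular for the counterexample the paper gives, which arises from merging) no such restriction holds. Your fallback plan of absorbing the overcount into the ``slack'' from the strict inequality in \cref{eq:qdigestProperty2} cannot work either, since that slack is worth at most one unit per bucket while the overcount is proportional to $n$.

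The paper's resolution is to accept the factor $4$ and prove the corrected bound $|Q| \le 4k + 1$ in \cref{the:sizeBound}; the $3k$ (in fact $2k+1$) bound is recovered only under the additional hypothesis of the construction invariant.
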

\begin{proof}\;\\
	$Q$ is a q-digest, therefore, as per \cref{def:qdigest}, all its nodes satisfy \cref{eq:qdigestProperty2}.
	This means that for each node $b$, we have
    \begin{equation*}
        \nabla_Q(b) > \frac{n}{k}
    \end{equation*}
	Summing this inequality on all buckets, we obtain
\begin{equation}    
	\sum_{v \in Q} \nabla_Q(v) > |Q| \frac{n}{k} \label[summation]{eq:summationP2Paper}
\end{equation}

Note that 
\begin{align}
	&\phantom{=}&&\sum_{v \in Q} \nabla_Q(v) \nonumber\\
	&\defeq&&\sum_{v \in Q}(v.cnt + v.parent.cnt + v.sibling.cnt) \nonumber\\
	&\leq&&3\sum_{v \in Q}v.cnt \label[inequality]{eq:upperBoundPaper}\\
	&=&&3n \nonumber
\end{align}

The \cref{eq:upperBoundPaper} is based on the observation that any bucket's count appears \emph{at most} three times, one in its own term, one as parent, and one as sibling. Therefore, it is possible to rearrange the terms of the summation in such a way that all three occurrences are in the same term. Doing this for all terms of the summation leads to this inequality.

Putting these together, we obtain
\begin{gather*}
	\left|Q\right| \frac{n}{k} < 3n \,,\\
    \text{and thus}\\
    \left|Q\right| < 3k
\end{gather*}

\renewcommand{\qedsymbol}{$\cancel{\square}$}
\end{proof}

This proof, however, has some problems:
\begin{itemize}
    \item \cref{eq:summationP2Paper} is not true, as \cref{eq:qdigestProperty2} is not necessarily verified for the root, so it cannot be included in the summation;
    \item the same summation, in the original article, despite referring to \cref{eq:qdigestProperty2}, uses $\frac{n}{k}$ instead of $\qlimit$, which is not a valid lower bound, because $\qlimit \leq \frac{n}{k}$;
    \item \cref{eq:upperBoundPaper} is not true, a counterexample can be seen in \cref{fig:qdigestSummationBigger3n}, where a valid q-digest has $\sum_{b \in Q} \nabla_Q(b) > 3n$. This is because the observation justifying this inequality states that any bucket appears at most three times in the summation, while in reality it might appear four times: one for its own term, one as sibling, and \emph{two} times as parent;
\end{itemize}
and, most importantly,
\begin{itemize}
    \item the lemma itself is not true, as it is possible to create a q-digest with a number of nodes greater than $3k$, as shown in \cref{fig:qdigestNumberNodesBigger3k}.
\end{itemize}

\begin{figure}[H]
    \centering
    \resizebox*{0.25\textwidth}{!}{
        \inputtikz{qd_sbi_qdigestSummationBigger3n}
    }
    \caption{Sample q-digest with $\sigma = 4, k = 2, n = 8$ for which the summation term is bigger than $3n$. Indeed:\\
    $\sum_{b \in Q} \nabla_Q(b) = 4 \times 6 + 1 = 25 \quad\nless\quad 24 = 3 \times 8 = 3n$.}
    \label{fig:qdigestSummationBigger3n}
\end{figure}

\begin{figure}[H]
    \centering
    \resizebox*{0.48\textwidth}{!}{
        \inputtikz{qd_sbi_qdigestNumberNodesBigger3k}
    }
    \caption{Sample q-digest with $\sigma = 64, n = 22, k = 4$ that has more than $3k$ nodes, despite the fact that both properties in \cref{def:qdigest} are satisfied. Indeed: $\qlimit = \floor{\frac{22}{4}} = 5$, and $\left|Q\right| = 13 \quad\nless\quad 12 = 3k$.}
    \label{fig:qdigestNumberNodesBigger3k}
\end{figure}

Fortunately, it is still possible to find a similar bound that is still linear in the parameter $k$, albeit with a bigger factor.

\begin{theorem}[Q-Digest Size Bound]\label{the:sizeBound}\;\\
    A q-digest $Q$ with compression parameter $k$ has size $\left|Q\right|$ at most $4k + 1$.
\end{theorem}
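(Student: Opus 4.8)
The plan is to rescue the counting argument behind the original (incorrect) Size Bound lemma, repairing each of its three defects: I would exclude the root from the summation over buckets, replace the bogus lower bound $\tfrac{n}{k}$ by one that genuinely follows from \cref{eq:qdigestProperty2}, and---the decisive fix---charge each node's count to the sum with its correct multiplicity, which is four rather than three.

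Concretely, I would first dispose of the trivial case $|Q|\le 1$ (where $|Q|\le 1\le 4k+1$ outright) and then assume $|Q|\ge 2$, so that $n\ge 2$. Put $B\defeq Q\setminus\{Q.root\}$, so that $|Q|\le |B|+1$. By \cref{def:qdigest}, every bucket other than the root obeys \cref{eq:qdigestProperty2}; hence $\nabla_Q(b)>\qlimit$ for all $b\in B$. Since $\nabla_Q(b)$ is a non-negative integer, this strengthens to $\nabla_Q(b)\ge\qlimit+1\ge \tfrac{n}{k}$, and summing over $B$ gives
\[
	\sum_{b\in B}\nabla_Q(b)\;\ge\;|B|\,\frac{n}{k}.
\]

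The core of the argument is the matching upper bound. Writing $\nabla_Q(b)=b.cnt+b.parent.cnt+b.sibling.cnt$ and regrouping the sum node by node, the count of any fixed node $w$ is charged at most once as its own term, at most once via its unique sibling, and at most \emph{twice} via the parent slot---once for each of $w$'s two children---for a total of at most four occurrences (this is exactly why the example in \cref{fig:qdigestSummationBigger3n} overshoots $3n$). As empty nodes contribute $0$, we get $\sum_{b\in B}\nabla_Q(b)\le 4n$. Combining with the lower bound and cancelling the positive factor $n$ yields $|B|\le 4k$, and therefore $|Q|\le|B|+1\le 4k+1$.

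The one genuinely delicate step is the multiplicity count in the upper bound: getting it right---four, not three---is what makes the statement true at all, since \cref{fig:qdigestNumberNodesBigger3k} already refutes the old $3k$ bound. Everything else is careful bookkeeping: the root must be excluded because it is the one bucket allowed to violate \cref{eq:qdigestProperty2} (this accounts for the additive $+1$), and passing from the strict $\nabla_Q(b)>\qlimit$ to the convenient $\nabla_Q(b)\ge\tfrac{n}{k}$ via integrality sidesteps any case distinction on whether $k\mid n$.
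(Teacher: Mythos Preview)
Your proof is correct and follows essentially the same approach as the paper's own argument: exclude the root, sum \cref{eq:qdigestProperty2} over the remaining buckets to get the lower bound $|B|\,\tfrac{n}{k}$, bound the same sum from above by $4n$ via the four-fold multiplicity count, and combine. Your version is slightly more careful in explicitly handling the degenerate case $|Q|\le 1$ and in spelling out the integrality step $\nabla_Q(b)\ge\qlimit+1\ge\tfrac{n}{k}$, but these are polish rather than substantive differences.
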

\begin{proof}
    As in the previous proof, we use the fact that \cref{eq:qdigestProperty2} is satisfied for all nodes, \emph{except the root}. For each non-root node $b$, we have
    \begin{equation*}
        \nabla_Q(b) > \qlimit
        \quad\Rightarrow\quad
        \nabla_Q(b) \geq \frac{n}{k} \,.
    \end{equation*}
    Once again, we sum on all the nodes of the q-digest, \emph{except for the root}, obtaining
    \begin{equation}
        \hspace*{-1em}\sum_{v \in Q \setminus {Q.root}} \hspace*{-1.4em}\nabla_Q(v)
        \quad\geq\quad
        \left(\left|Q\right| - 1\right) \frac{n}{k}\,. \label{eq:newBoundInequality}
    \end{equation}

    Analogously to the previous proof, we find an upper bound for the summation:
    \begin{align}
        \hspace*{-1em}\sum_{v \in Q \setminus {Q.root}} \hspace*{-1.4em}\nabla_Q(v)
        \quad\leq\quad
        \sum_{v \in Q} \nabla_Q(v)
        \quad\leq\quad
        4n\,. \label[inequality]{eq:newBound4n}
    \end{align}
    Similarly to the original proof, this inequality follows from the observation that a node appears at most \emph{four} times in the summation, one in its term, one as sibling, and two times as parent, one for each child.

    We again proceed by putting (\ref{eq:newBoundInequality}) and (\ref{eq:newBound4n}) together, obtaining:
    \begin{equation*}
        \left(\left|Q\right| - 1\right) \frac{n}{k} \leq 4n \,, 
    \end{equation*}
    from which we obtain the bound
    \begin{equation*}
        \left|Q\right| \leq 4k + 1\,.
    \end{equation*}

\end{proof}
    
\begin{corollary}~\\
    If the \emph{construction invariant} holds true, then the size of a q-digest is at most $2k + 1$.
\end{corollary}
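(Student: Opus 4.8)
The plan is to follow the proof of \cref{the:sizeBound} almost verbatim, but to replace its counting estimate \cref{eq:newBound4n} with a sharper one that exploits the construction invariant. As before, I would start from the fact that every non-root node $b$ of $Q$ satisfies \cref{eq:qdigestProperty2}, hence $\nabla_Q(b) \geq n/k$, and sum over all non-root buckets to obtain $(\left|Q\right| - 1)\frac{n}{k} \leq \sum_{v \in Q \setminus \{Q.root\}} \nabla_Q(v)$, exactly as in (\ref{eq:newBoundInequality}).

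The only new ingredient is the upper bound on $\sum_{v \in Q \setminus \{Q.root\}} \nabla_Q(v)$. Here I would invoke \cref{cor:constructionInvariantCorollary}: when the construction invariant holds, no bucket has a bucket as parent, so for every bucket $v$ we have $v.parent.cnt = 0$, i.e.\ the middle term of $\nabla_Q(v) = v.cnt + v.parent.cnt + v.sibling.cnt$ vanishes. Thus the summation reduces to $\sum_{v \in Q \setminus \{Q.root\}} (v.cnt + v.sibling.cnt)$, and now each bucket's count appears \emph{at most twice} in it — once in its own term and once as the sibling term of its (unique) sibling — so the sum is at most $2n$. One should also remark that the root contributes nothing: it is excluded from the summation, it has no sibling, and by the construction invariant its children are not buckets, so its count is never picked up as a parent term either. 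Combining $(\left|Q\right| - 1)\frac{n}{k} \leq 2n$ and simplifying yields $\left|Q\right| \leq 2k + 1$.

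The main thing to be careful about — though it is a mild point — is the sibling term: the construction invariant constrains parents and children of buckets but says nothing about siblings, so a bucket may legitimately have a bucket sibling, and that sibling's count does appear (once) in the summation. The argument therefore hinges on showing that this is the \emph{only} way a count can be counted a second time, which is precisely what killing the ``as parent'' contributions via \cref{cor:constructionInvariantCorollary} guarantees; the rest is the same bookkeeping as in \cref{the:sizeBound}.
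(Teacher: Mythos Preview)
Your proposal is correct and follows essentially the same approach as the paper: both arguments reuse the proof of \cref{the:sizeBound} and simply sharpen \cref{eq:newBound4n} from $4n$ to $2n$ by observing that, under the construction invariant, parent counts vanish and each bucket's count is therefore picked up at most twice (once in its own term, once as a sibling). Your write-up is in fact a bit more careful than the paper's, since you explicitly check that the root's count contributes nothing to the non-root summation.
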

\begin{proof}    
    If the \emph{construction invariant} (\cref{eq:constructionInvariant}) holds true, then we can use $2n$ instead of $4n$ as upper bound in \cref{eq:newBound4n}, because no bucket can be the parent of other buckets, meaning that we can remove the parent counts from the summation, and each node can appear at most two times in the count. Hence, the limit becomes
    \begin{equation*}
        \left|Q\right| \leq 2k + 1 \,.
    \end{equation*}
\end{proof}

\section{Whole Digest Authentication}\label{sec:wda}
We now introduce a first technique to authenticate \mbox{q-digests}. This technique is lightweight in nature, and it is based on \emph{cryptographic hash functions}~\cite{menezes1997handbook}.\\
We introduce this approach for two reasons: first, to provide a baseline for other authentication techniques; second, because it is actually sufficiently adequate to use in many real use cases, due to its simplicity and to its low computational complexity.

To describe the scheme, we refer to the \emph{authenticated data structure model}~\cite{tamassia2003authenticated}. In this model, we have a \emph{user} $u$, a \emph{responder} $r$, and a \emph{source} $s$. The user $u$ wants to perform a query on a data structure $S$ held by $s$, but instead of sending the query directly to $s$, $u$ queries $r$, which holds a copy of $S$. In order to guarantee that $r$ has sent a correct response to the issued query, and that the data structure has not been tampered with, or that it has not been corrupted, some \emph{authentication information} about $S$ will need to be computed, which then will be used by $r$ to send a \emph{proof} of its response, which in turn will be used by the user to verify the correctness of the response.

\subsection{The Approach}
The approach proposed here, which we will call \emph{Whole Digest Authentication}, or \emph{WDA} in short, is very simple: a q-digest $Q$ is passed through a \emph{cryptographic hash function} $\mathcal{H}$, and the user $u$ stores $\mathcal{H}(Q)$. When $u$ needs to perform a query, the responder $r$ sends the whole digest $Q_r$. Then $u$ checks that $\mathcal{H}(Q) = \mathcal{H}(Q_r)$, and performs the query it needs to perform, or discards $Q_r$ if the hashes are not equal, recognizing there has been some corruption, whether it be intentional or not. Additional checks the user could perform are that both properties of \cref{def:qdigest} hold, and that the size of the structure fits between the bounds of \cref{the:sizeBound}. With these checks in place, we are essentially limiting the responses the user accepts to the set of valid q-digests, which should make a preimage attack on the hash even harder, as a malicious responder now has to ensure that the structure sent satisfies those properties, in addition to finding a value that produces the same hash.

\paragraph{Pros and Cons}\;\\
\indent This approach surely has the benefit of being very easy to implement and verify, but has the drawback that the user is required to have read permission on the whole structure, which could be undesirable in some setting where we want to keep the data private, and only disclose statistical data retrievable with queries. It indeed poses a problem in a \emph{non-public blockchain} setting, where users do not have read permissions on blocks. This issue will be addressed with the introduction of more advanced techniques in the next sections.

\section{KVC-Authenticated Queries}\label{sec:kvc-qa}
A different approach from \emph{WDA} is to provide authentication for the individual queries supported by the \mbox{q-digest}, allowing more granular proofs of authenticity and a more granular control on what data is exchanged between users and responders.\\
This approach makes use of \emph{Key Value Commitments} to authenticate q-digests, so that we can check membership of individual buckets on the authenticated structure. We call this approach \emph{Key Value Commitment Query Authentication}, or \emph{KVC-QA} in short.

As with \emph{WDA}, the q-digest is stored by the responder, and its authentication information is publicly accessible by users, provided for example by a trusted third party. In this instance, though, the authentication information is not simply the hash of the digest, but its KVC.

\subsection{Key Value Commitments}
There are several constructions for a Key-Value Commitment, from now on \emph{KVC}, such as those proposed in~\cite{agrawal2020kvac} and~\cite{campanelli2022zero}, and they are still being researched at the time of writing, with new constructions being proposed having interesting properties (e.g, the Z\kiwi proposed in~\cite{campanelli2022zero} is homomorphic, which could be of interest in our case).

For the purpose of this study, since we only need a limited subset of the operations supported by these commitments, we consider a generic, simplified, abstract KVC on the model of the \emph{KVaC} presented in~\cite{agrawal2020kvac}.

\begin{definition}\label{def:kvc}\;\\
A Key Value Commitment $KVC$ is a cryptographic primitive equipped with the following procedures:
\begin{align*}
	&Initialize() &&\rightarrow &&C\\
	&Insert(C, \langle key, value \rangle ) &&\rightarrow &&C'
\end{align*}
\end{definition}

$Initialize$ takes no argument and performs any operation needed to initialize the primitive. In a specific implementation this procedure could have additional input or output parameters (e.g, the security parameter $\lambda$ in a \emph{KVaC}), but since this is implementation-dependent, we do not concern ourselves with it.\\
We will denote with $K_{init}$ the time complexity of this procedure.

$Insert$ takes as argument a KVC obtained by \emph{initialization} or as a result of an(y number of) \emph{insertion(s)}. Similarly to the previous procedure, this one too could have more parameters depending on the specific implementation. This is the case with the update information returned by the \emph{KVaC} implementation, but since in our application we do not update proofs, we can disregard it.\\
We will denote with $K_{ins}$ the time complexity of this procedure.

\subsubsection{KVC Proofs}
We assume that the following property holds:
\begin{property}
Given two KVCs $C_1$ and $C_2$, that are respectively the result of a series of insertions of all the elements of the Key-Value Maps $M_1$ and $M_2$:\\
\begin{equation*}
	C_1 = C_2 \Rightarrow M_1 = M_2\; \text{ (with high probability)}
\end{equation*}

In other words, if two commitments are equal, then we expect the maps they represent to also be equal, with high probability.
\end{property}
Naturally, because of the nature of \emph{KVCs}, our proofs will be probabilistic, and we want the probability of generating two identical commitments representing different maps to be negligible. Depending on the actual implementation of the KVC, this could be justified by assumptions like the \emph{RSA assumption}~\cite{rivest2003rsa}.

\paragraph{Membership Query}\;\\
\indent We use the property just stated to prove membership of an element in a Key-Value Map authenticated by a KVC. The paradigm used is described in Algorithm~\ref{alg:membershipQuery}. Namely, to verify that the pair $(k,v)$ belongs to a KVM $M$, we use as \emph{proof} $P$ a KVC on the set $M \setminus \{(k,v)\}$, then we add the pair and check that it is equal to the KVC $C$ on the whole set $M$. This can then be generalized to execute multiple membership checks at once, by simply replacing the singleton set $\{(k,v)\}$ with the set of pairs we need to check, and then inserting them all.

\setalgorithmicfont{\footnotesize}
\begin{algorithm}
\caption{Membership Query}
\label{alg:membershipQuery}
\begin{algorithmic}[1]
\Function{Member}{$C$: KVC, $P$: proof, \mbox{$\langle$ $k$: key, $v$: value $\rangle$:} pair to test membership of}
	\State $P \gets$ \Call{Insert}{$P$, (k, v)}
	\State \Return $C = P$
\EndFunction
\end{algorithmic}
\end{algorithm}

In our specific application with q-digests, for non-membership queries it is equivalent to check whether a key is unset, or if the value associated to that key is $0$. This is because conceptually all nodes of the q-digest tree are buckets, but some of them have $count = 0$, and we say they do not belong to the q-digest.

\subsection{Authenticated Quantile Query}
To perform an \emph{authenticated quantile query}, reported in \cref{alg:authenticatedQuantileQueryComplexity}, first we need to sort the buckets according to a \emph{post-order traversal} on the nodes of the q-digest tree. We then proceed to compute the result of the query as if it were a regular, non-authenticated quantile query, by accumulating the counts of the buckets we encounter into a variable $count$. Once the current $count$ becomes greater than $qn$, for a certain bucket $b$, the result of the (non-authenticated) query will be $b.max$, the maximum element $y$ in the range $[x, y]$ represented by the bucket.\\
The difference with a non-authenticated query is that we need to compute a \emph{proof} for this query. This proof will consist of a commitment of all the buckets we have not visited during the query. We simply need to iterate over all the remaining buckets and insert them into an empty commitment.\\
We then return a tuple containing:
\begin{itemize}
	\item $b.max$: the result of the quantile query;
	\item $Q.sublist(0,b.index)$: the list of all the buckets that have been visited during the query, including $b$;
	\item $P$: the proof generated according to the procedure described above.
\end{itemize}

\setalgorithmicfont{\footnotesize}
\begin{algorithm}
\caption{Authenticated Quantile Query}
\label{alg:authenticatedQuantileQueryComplexity}
\begin{algorithmic}[1]
\Function{AQQ}{$Q$: q-digest, $q$: quantile}
	\State Sort $Q$ according to a post-order visit
	\CodeAnnotation{0.05em}{0}{\hspace*{0.2em}$K_{sort}$}{}
	\State $count \gets 0$
	\State $b \gets null$
	\State $i \gets 1$
	\CodeAnnotation{14.1em}{0.75}{$\bigo(1)$}{\\{}\\{}}
	\While{$b = null \land i \leq Q.length$}
		\State $count \gets count + Q[i].cnt$
		\If{$count \geq q \times n$}
			\State $b \gets Q[i]$
		\EndIf
		\State $i \gets i + 1$
	\EndWhile
	\CodeAnnotation{12em}{2.3}{$\bigo(k)$}{\\{}\\{}\\{}\\{}\\{}\\{}}
	\State Initialize commitment $P$
	\CodeAnnotation{5.8em}{0}{\hspace*{0.2em}$K_{init}$}{}
	\While{$i \leq Q.length$}
		\State $P \gets Insert(P, (Q[i].index, Q[i].cnt))$
		\State $i \gets i + 1$
	\EndWhile
	\CodeAnnotation{12em}{1.1}{$\bigo(k K_{ins})$}{\\{}\\{}\\{}}
	\State \Return $(b.max, sublist(Q, b), P)$
\EndFunction
\end{algorithmic}
\end{algorithm}

\subsubsection{Verification}
To verify the authenticity of the quantile query, we need to:
\begin{enumerate}
	\item check that the sum of the counts of all buckets in the sublist is greater than $qn$;
	\item check that the sum of the counts of all buckets in the sublist minus the last one is less than $qn$;
	\item prove membership of all the buckets in the sublist, and
	\item prove non-membership of all the buckets not in the sublist that would be counted before the bucket $b$.
\end{enumerate}
The first two points are needed to ensure that the algorithm stops at the correct bucket. If the user did not check the second point, the responder could send any value greater than or equal to the actual answer.
The last two points can be checked at the same time if we use a KVC that treats unset values as $0$ (as does the \emph{KVC-Inc increment-only commitment} described in~\cite{agrawal2020kvac}). The whole procedure is reported in \cref{alg:verifyQuantileQueryComplexity}.

Crucial point to note is that in \cref{alg:verifyQuantileQueryComplexity} the \emph{nodes} on which the user needs to iterate are not the buckets of the q-digest, but are all nodes of the q-digest tree. This is because we also need to prove non-membership of the empty nodes, for reasons explained in \cref{subsubs:aqqOmitLeft}.

\setalgorithmicfont{\footnotesize}
\begin{algorithm}
\caption{Quantile Query Verification}
\label{alg:verifyQuantileQueryComplexity}
\begin{algorithmic}[1]
\Function{QQV}{$m$: quantile query answer, $cb$: buckets counted during the query, $P$: proof, $C$: commitment of the whole digest}
\State $count \gets 0$
\CodeAnnotation{14em}{0}{\hspace*{0.2em}$\bigo(1)$}{}
\ForAll{$b$ before $cb[cb.length]$ in post-order}
	\State $P \gets Insert(P, (b.index, b.cnt))$
	\State $count \gets count + b.cnt$
\EndFor
\CodeAnnotation{15.2em}{1.15}{$\bigo(\sigma K_{ins})$}{\\{}\\{}\\{}}
\If{
	\hspace*{-0.7em}\raisebox{-1em}{$\Biggl($}\hspace*{-0.3em} \begin{varwidth}[t]{\linewidth}
		$(count \geq qn) \land$ \par
		$(count - cb[cb.length].cnt < qn) \land$ \par
		$P = C$
	\end{varwidth}\raisebox{-1em}{$\Biggr)$}\hspace*{-0.5em}
}
	\State \Return 1
\Else
	\State \Return 0
\EndIf
\CodeAnnotation{15.75em}{2.2}{$\bigo(1)$}{\\{}\\{}\\{}\\{}\\{}\\{}}
\EndFunction
\end{algorithmic}
\end{algorithm}

\begin{example}\label{ex:authenticatedQQ}\;\\
Suppose we want to authenticate a quantile query on the tree in \cref{fig:sampleqtree}, with $q=0.5$, $n = 15$, $\sigma = 8$.

\begin{figure}
	\resizebox{\columnwidth}{!}{
		\inputtikz{kvcaq_sampleqtree}
	}
	\caption{Q-Digest ($n = 15$, $k = 5$, $\sigma = 8$) on which the query is being executed.}
	\label{fig:sampleqtree}
\end{figure}

\noindent Initially, the q-digest will be stored as a list of buckets:
\begin{equation*}
	Q = \{(1,1),(6,2),(7,2),(10,4),(11,6)\}
\end{equation*}

\paragraph{Authenticated query}~\\
As per \cref{alg:authenticatedQuantileQueryComplexity}, we proceed as follows:
\begin{enumerate}
	\item Sort $Q$ in post-order, which becomes\\
	$\{(10,4),(11,6),(6,2),(7,2),(1,1)\}$;
	\item initialize the variables and start executing the loop:
	\begin{enumerate}
		\item first, we check the first node in the list: $(10,4)$. We assign $count \gets 4$. Since this is less than $qn = 0.5 \times 15 = 7.5$, we continue the loop;
		\item we now add $(11,6)$, and $count \gets 4 + 6$. Since now $count > qn = 7.5$, we stop the loop here, remembering we stopped on the bucket $(11,6)$. We also save $b.max = 4$, as the range of the bucket with index $11$ is $[4,4]$. This will be the answer to the quantile query;
	\end{enumerate}
	\item initialize a commitment $P$ and insert all the remaining buckets: $\{(6,2),(7,2),(1,1)\}$;
	\item return $\left\langle b.max = 4, \{(10,4),(11,6)\}, P \right\rangle$.
\end{enumerate}

\paragraph{Query verification}
In order to verify the query, we need the following information:
\begin{itemize}
\item The triple returned by the query,
\item $n$, which is the count of all items stored in the digest (i.e., the sum of all counts),
\item $C$, the commitment of the whole digest.
\end{itemize}
As per \cref{alg:verifyQuantileQueryComplexity}, these are the steps to follow:
\begin{enumerate}
\item Initialize $count$ to $0$,
\item Insert all the nodes of the full binary subtree that would be visited before $b$ into the proof. These nodes in this case would be:
\begin{equation*}
	\{(8,\emptyset),(9,\emptyset),(4,\emptyset),(10,4),(11,6)\}
\end{equation*}
\item Sum the count of all nodes counted in the previous step: 
\begin{equation*}
	count \gets 0 + 0 + 0 + 4 + 6 = 10
\end{equation*}
\item Check that
\begin{itemize}
\item $count = 10 > 7.5 = qn$
\item $count - b.max = 10 - 4 = 6 < 7.5 = qn$
\item $P = C$
\end{itemize}
\item If and only if all the checks succeed, then the query is verified.
\end{enumerate}

\end{example}

\subsubsection{Complexity Analysis}
The time complexity of the algorithms described above naturally depend on the $K_{init}$ and $K_{inc}$ of the specific $KVC$ taken into consideration, as they include calls to the procedures \textproc{Initialize} and \textproc{Insert}.

Looking at the annotations in \cref{alg:authenticatedQuantileQueryComplexity} we note that the cost is $\bigo(K_{sort} + k K_{ins} + K_{init})$, where $K_{sort}$ is $0$ if we assume that $Q$ is already sorted.

For the verification part, looking at the annotations in \cref{alg:verifyQuantileQueryComplexity}, we can see that in the worst case it costs $\bigo(\sigma K_{ins})$. The real cost of performing the authenticated query and verifying it depends on how early the query ends, specifically on how many buckets the algorithm has to visit before returning the answer. In fact, the responder will insert all the buckets that \emph{have not} been counted in the query, while the user will have to insert in a KVC all the buckets that \emph{have} been counted, plus the empty nodes that appear before the bucket where the algorithm stopped. These empty nodes are the ones that make up the majority of the cost, as they are not linear in $k$, but are actually linear in $\sigma$, as they are nodes of the q-digest tree.

At the extremes, if $q=0$ the responder will have to \textproc{Insert} all buckets (except the first one), while the user only has to \textproc{Insert} that first one, leading to costs of query and authentication respectively of $\bigo(k K_{ins} + K_{init})$ and $\bigo(K_{ins})$. This is the best case scenario for the user and the worst case scenario for the responder. Symmetrically, if $q=1$, then the responder will just have to send an empty commitment as proof, while the user will have to \textproc{Insert} all the nodes of the full q-digest tree associated with $Q$. In this case, the costs become $\bigo(k + K_{init})$ for the query, and $\bigo(\sigma K_{ins})$ for the verification part. This is the best case scenario for the responder and worst case scenario for the user.

\subsubsection{Breaking the Protocol by Omitting buckets from the left}\label{subsubs:aqqOmitLeft}
Suppose that \cref{alg:verifyQuantileQueryComplexity} did not insert empty nodes in the commitment.
In this case, a malicious responder could avoid counting buckets from the left of the bucket $b$ (where the algorithm stops), resulting in a wrong answer to the quantile query. For example, considering the q-digest we used in \cref{ex:authenticatedQQ}, the responder could execute the query by only counting nodes in the subset $\{(11,6),(6,2),(7,2),(1,1)\}$, omitting the first bucket $(10,4)$. The reported answer to the query would then be $6$, as the computation would terminate on the bucket $(6,2)$, which has range $[5,6]$.

Now, if this were the only way the responder broke the protocol, then the user who is verifying the query would reject it because the bucket $(10,4)$ is not present in the commitment it has computed, resulting in a commitment that is different from $C$.

However, the malicious responder could bypass this problem by simply adding the bucket $(10,4)$ to the proof $P$, as if it were on the right-hand side of the node $b$. In this way, the user would compute the correct commitment, resulting in an incorrect query being recognized as valid.

The method we adopt to avoid this kind of attack is to insert all the nodes of the complete binary tree to the commitment during the verification process, including empty ones. By doing this, a non-membership check of the node with index $10$ (or equivalently a membership check of $(10,0)$) will fail, resulting in the query correctly being rejected.

This, however, comes at a very high computational cost, as it greatly increases the verification complexity from $\bigo(k K_{ins})$ to $\bigo(\sigma K_{ins})$.

\subsection{Feasibility of the Approach}\label{sec:kvcqaFeasibility}
The attack just described in \cref{subsubs:aqqOmitLeft} reveals a core issue with the proposed approach: not only the user needs to verify membership for the non-empty buckets, they also need to verify the empty ones, to guarantee that nodes that the user thinks are empty have not been added to the proof by the responder. This essentially means that we cannot use usual techniques to accelerate verification (e.g., by using Merkle trees on subtrees of the q-digest tree), \emph{in a way that is agnostic of the data included in the digest itself}. The issue is structural, as it is a consequence of the ``fingerprinting'' nature of the KVC: the user is not able to see what elements are included in a KVC (which would be like ``reversing'' it in some way), otherwise they could notice that the responder has not been honest with the answer.

As this authentication method is ridden with this problem, the range of applications where this approach is feasible is reduced, as it greatly depends on the computational capabilities of the users.

\section{Comparison Between WDA and KVC-QA}\label{sec:wdavskvcqa}
\subsection{Space and Time Complexity}
The two methods of \emph{Whole Digest Authentication} and \emph{Key Value Commitment Query Authentication} both have some advantages and disadvantages.

First, it is important to underline that they are two fundamentally different ways of authenticating the structure: one (\emph{WDA}) authenticates the digest as raw data, as a black box, and then it is responsibility of the user knowing how to perform queries; the other one \mbox{(\emph{KVC-QA})}, instead, is in some way query-aware, in the sense that it allows to authenticate the query process itself. Using KVCs is of course also more granular than using a simple hash, in the sense that it allows to verify that one element (in our case, one bucket) belongs to the set, instead of having to verify the whole set and then check for membership.

In general, WDA requires less computational time: the responder does not need to perform any calculation, and the user just needs to calculate $\mathcal{H}(Q)$ and compare it with the expected hash, while, as we discussed, KVC-QA requires $\bigo(k K_{ins})$ for the query, and $\bigo(\sigma K_{ins})$ for verification, and these costs could make it prohibitive in many applications, even by choosing a KVC with \mbox{$K_{ins} = \bigo(1)$}. As far as space is concerned, WDA requires the user to store the whole digest, while KVC-QA only requires the exchange of a subset of buckets and of the proof, so in principle it would need less space, if the KVC chosen has constant space requirements.

\subsection{Privacy}\label{subs:privacy}
The aspect where KVC-QA wins over WDA is on privacy: WDA requires the responder to send the whole digest to the user, which might be undesirable in a scenario where we do not want to disclose the whole digest, but just want to give the user the least possible information to answer the query. This issue will be later discussed in \cref{sec:zkqd}.
By contrast, KVC-QA are somewhat better in this regard, as authenticating a query does not necessarily require the user to know the whole content of the q-digest. However, a user who wants to obtain all the nodes might still do it, by simply performing a quantile query with $q=1$ as in order to authenticate such query, the responder has to send all the buckets in the digest to the user.

\paragraph{Lowering Precision to Enforce Privacy}\;\\
\indent Suppose we are in a scenario where we want to control the precision of the queries performed by users, for example by allowing a group of users to perform exact queries, a second group to perform approximate queries, and finally a third group to perform approximate queries with a lower precision than that of the second group, and maybe we do not want to allow users in this last group to gain information on high frequency values in the distribution (e.g., leaves with a count high enough that they have not been compressed). We could exploit the approximating nature of q-digests to obtain these goals.

\subsubsection{\texorpdfstring{$k$}{k} as Privacy Parameter}\label{subs:privacyk}
The first method we present to enforce some privacy on the structure is to simply -- and quite naturally, due to the nature of the structure -- lower the value of the parameter $k$, in order to increase the compression of the q-digest and disclose less precise information to the user.

This approach would be used roughly in this way:
\begin{enumerate}
    \item choose a number $l$ of \emph{privilege levels} $p$, where $p_1$ is the most privileged level and $p_l$ is the least privileged one, and in general $p_1 > p_2 > ... > p_l$;
    \item assign an increasing monotonic mapping between privilege levels and different values of $k$, such that $k_{p_1} > k_{p_2} > ... > k_{p_l}$;
    \item when creating a q-digest, for each $p_j$ compute the corresponding $Q_{p_j}$, with the appropriate $k_{p_j}$ value;
    \item store each digest $Q_{p}$ separately, and publish authentication information for each one ($\mathcal{H}(Q_{p})$ or KVC($Q_{p}$), depending on the method used);
    \item when user $u_{p_j}$ (with privilege level $p_j$) sends a query, use the corresponding digest $Q_{p_j}$ to answer the query.
\end{enumerate}

The pros of this system are that as we said, it is very natural due to the properties of the structure, and that we already have bounds on the query error expressed in terms of $k$, so we can predict upper bounds for each privilege level. On the other hand, though, we cannot enforce lower bounds for the error, meaning that for some query the users could still access more precise data than what we would like. Furthermore, by only changing the value of $k$, we cannot ensure that a leaf will be necessarily merged into a parent node, the only case in which we can ensure a q-digest has no leaves (without knowing its distribution a priori) is the degenerate case with $k=1$, where the q-digest collapses into one single node, the root.

\subsubsection{Coarse-Grained Q-Digests}
To address the shortcomings of using $k$ to restrict the information a user receives, we propose another approach we call \emph{coarse-grained q-digests}. Usually, a q-digest is built on a binary partition of the domain, where for each level of the binary tree we divide the range in smaller ranges, up until the leaves, which correspond to unit ranges. Of course, we do that because we want the digest to be as precise as possible. However, if we wish to prevent the users from accessing unit-range information, that is, information about the frequency of a single element of the domain, we can remove this information from the q-digest altogether, by stopping the binary partition earlier, effectively building a q-digest whose leaf buckets do no represent unit ranges, but bigger ones instead. The construction then would proceed, as one can imagine, by distributing the initial values in leaves with the appropriate range.
An analogous way of seeing this q-digest is as a regular one, except that the last $l$ levels of the tree have been forcefully compressed upwards, disregarding the two properties in Definition~\ref{def:qdigest}, then all the nodes in those levels are removed from the tree. Lastly, a regular compression takes place, as is for the usual creation of a q-digest.

What is interesting regarding this approach is that it is in some way complementary to the previously discussed one, in that it allows to force a lower bound on the query error. As such, it is possible to combine the two methods to have control on both lower and upper bounds at the same time, allowing for more control on how much information the users are able to access.

\subsection{Cumulative Digest}\label{sec:cumulativeDigests}
Let us now consider a scenario where we have a sequence of q-digests built from data coming at different moments in time. If users want to query the whole sequence of digests, the naive alternative could be performing queries on every single digest, but with the passing of time, this can become prohibitive. A more interesting -- and less expensive -- alternative involves building a q-digest $Q^c$ representing cumulative information for all digests up to this point.\\
Let $Q_i$ be a q-digest representing the data generated at time $i$, we can recursively define a cumulative q-digest in this way:
$$
    Q^c_i = Q^c_{i-1} \qmerge Q_i \,.
$$
Naturally, for the first block we define
$$
    Q^c_0 = Q_0\,,
$$
as there is no previous block whose data we need to consider.

However, the \emph{cumulative digest} we have just described presents in practice a major flaw: the number of values stored ($n$) keeps growing unbounded with every digest added, while the compression parameter $k$ has to be fixed at the creation of the digest. Since the merge of two buckets is determined by a comparison with $\qlimit$, this means that at some point the data added will be excessively compressed, to the point of losing any meaningful value.

This phenomenon is particularly significant for some distributions of the values stored. Particularly interesting is the case where initially the values inserted in the q-digest are mostly in the lower half of the range, which causes the right side of the tree to be compressed upwards, and then at a later time the distribution shifts so that the values are mostly found in the upper half of the universe range. At this time, the new values inserted from the q-digest $Q_i$ will not be able to create ``meaningful'' buckets (closer to the leaves) in the cumulative digest $Q^c$, assuming that $n_{Q_i} < \floor{\frac{n_{Q^c}}{k}}$, and will be compressed upwards. At the same time, the increase in $n_{Q^c}$ will eventually cause the left side of the tree to be compressed as well. In the worst case, this could lead to a q-digest where the only existing buckets are a handful of nodes close to the root, and any new q-digest merged into it would result in the new data being merged into these buckets or one of their direct children. Of course, a q-digest built on a very large dataset, that contains only a few buckets representing big ranges, cannot provide adequate accuracy on most queries.

\subsubsection{Partial Cumulative Digest}
A solution to the aforementioned issue is to provide cumulative digests on smaller, limited ranges of digests, essentially creating a kind of skip list of cumulative digests. This way, for a cumulative digest $Q^c$ over $w$ different digests, $n_{Q^c}$ does not grow unbounded, but we can expect it to be roughly $w \times n_{Q_i}$ (assuming the number of values in each $Q_i$ are similar and do not deviate too much), and this alleviates the problem. We could choose a value for $k$ accordingly to avoid too much compression. Another idea could be building this ``skip list'' by building cumulative q-digests over a sliding window of single q-digests.

\section{Further developments}\label{sec:furtherDevelopments}

\subsection{Compression Parameter}
\subsubsection{Recompressing Quantile Digests}\label{subs:digestRecompression}
For now, all the operations on q-digests have left the parameter $k$ unchanged. Being able to change it after the digest's creation could open the way to some interesting uses of the data structure. For instance, it would be possible to use the compression-based privacy strategy described in \cref{subs:privacyk} without needing to store multiple versions of the same q-digest, but simply building them on-demand. Unfortunately, even in this case, the authentication information still has to be calculated prior and kept in a trusted store. Of course, this strategy only works if the derivation of the $Q_{p}$s from $Q$ is deterministic, otherwise the responder would obtain a digest that cannot be authenticated with the information accessible by the user.

Unfortunately, it is not immediately clear how feasible recompressing a q-digest is, and how its properties would be affected. Surely, recompressing it with a value $k' > k$ would be more problematic, as that would mean that $\floor{\frac{n}{k'}} \leq \qlimit$, which could invalidate \cref{eq:qdigestProperty1}, and contrarily to \cref{eq:qdigestProperty2}, we do not have a procedure to restore it on the structure. Furthermore, increasing the value of $k$ intuitively means lowering the level of compression, and since the compression is lossy, obviously we cannot obtain an exact ``decompression''.

More realistic is the case where $k' < k$, as we do not have this issue, and it is possible to simply use the \textproc{Compress} procedure to restore Property~\ref{eq:qdigestProperty2} if needed.

\subsubsection{Merging Q-Digests with Different Values of \texorpdfstring{$k$}{k}}
It could be interesting and useful to study what happens if one tries to merge two q-digests having a different value for the compression parameter $k$. If the possibility discussed above of recompressing digests turns out to be feasible, then a simple method to achieve this kind of merge would directly follow: it would be sufficient to recompress one of the two q-digests and then merge them.

\subsection{Homomorphic KVCs for Q-Digests}
\paragraph{Homomorphic Cryptography}\;\\
\indent \emph{Homomorphic cryptography}~\cite{henry2008theory} techniques allow operations to be performed on encrypted data directly, without needing to go through the process of decrypting the data and then encrypting it again after performing said operation. What this means for us, is that by using a \emph{Homomorphic Key-Value Commitment} to authenticate q-digests (such as the one in \cite{campanelli2022zero}), it could technically be possible to authenticate some composition of multiple digests, namely sum and merge of two q-digests, by only using the KVCs of the two starting digests.
In formulae, what we would like to obtain is:
\begin{align*}
	KVC(Q_1 \qadd_{Q} Q_2) &= KVC(Q_1) +_{KVC} KVC(Q_2)\\
	KVC(Q_1 \qmerge_{Q} Q_2) &= KVC(Q_1) \qmerge_{KVC} KVC(Q_2) \,.
\end{align*}

The first one, the sum of two q-digests, seems reasonable, as by \cref{def:qsum} the sum of two q-digests is an operation very similar to a set union, with the caveat that counts have to be added together for buckets with the same index, but this should be a fairly basic operation to implement in a homomorphic KVC.

The q-digest merge, on the other hand, requires some non-trivial operations, as it includes a compression, and it is not clear if such operation is possible on a homomorphic KVC.

\subsubsection{Reducing the Cost of KVC Verification}
We have discussed the reason why an authenticated quantile query using KVC-QA has a high worst case verification cost in \cref{sec:kvcqaFeasibility}.
This high computational cost could be attenuated by an arbitrary constant, using homomorphic KVCs, by storing commitments of partial subtrees of the q-digest tree. For example, one could store both a commitment of the whole digest (needed for verification), and a commitment of the subtree rooted in the left child of the root. This way, during verification, the user will never need to insert more than half of the nodes of the whole tree:
\begin{itemize}
	\item if bucket $b$ (where the algorithm stops) is in the subtree rooted in the left child of the root, then the verification algorithm proceeds as usual, and the maximum number of insertions needed is $\sigma$;
	\item if bucket $b$ is in the subtree rooted in the right child of the root, or the root itself, then the verification algorithm starts by summing the proof it obtained as response from the responder to the commitment of the subtree rooted in the left child of the root. By doing this, we effectively eliminate the need of adding all the nodes that are already covered by the subtree commitment.
\end{itemize}
Instead of storing the commitment for only one subtree, we can store an arbitrary number of them, and, during verification, use the one that minimizes the number of nodes that still need to be added. Choosing which one to use can be done efficiently by using the level order index of the node where the algorithm stopped. Of course this method requires a number of new commitments to be saved, which increases the memory footprint.

Another method is to store commitments of partitions of the whole q-digest tree, and to sum them during verification. This method though, requires a little more time to choose the commitments and sum them.

This approach can also work if we just have a KVC where we can deterministically obtain a sum of two (partial) commitments.

\subsubsection{Authenticating the Merge of two Q-Digests}
If we had a KVC that is homomorphic with respect to the merge operation on q-digests, it would make some applications more appealing, such as the cumulative digests described in \cref{sec:cumulativeDigests}. Specifically, a cumulative digest could be dynamically computed by the responder, and the user would only need the authentication information for the q-digests used in order to be able to authenticate it. Without the possibility of merge authentication, the authentication information for every cumulative digest has to be computed in advance and then stored somewhere accessible to the user.

Because of what we have seen during the discussion of the original \textproc{Compress} procedure in \cref{sec:compressionIssues}, we could decide to build the cumulative digest using the sum of two q-digest, instead of their merge, because in the end the only property that would not hold is the size bound of \cref{the:sizeBound}. This could be fine when dealing with a small number of q-digests, but when the number of included digests increases, the size of the structure could grow too big.

\subsubsection{Authenticating a Q-Digest Compression}
Should the compression of a q-digest turn out to be authenticable by using the authentication information of the digest prior to its compression, and if the recompression of a q-digest with a different $k$ also turns out to be possible, a useful consequence is that the scheme presented in \cref{subs:digestRecompression} could be further improved by eliminating the need of computing the authentication information for each $Q_{p}$, and in turn eliminating the need of computing the $Q_{p}$s themselves. This is because the recompressed $Q_{p}$ could be computed on-demand by recompressing $Q$ with the appropriate value for $k$, and the authentication information needed is the same used for $Q$ itself, meaning there is no need to precompute it.

\subsection{Reducing Information Complexity}\label{sec:zkqd}
As we have discussed throughout this work, different authentication methods reveal more or less information related to the structure of the q-digest. Some methods, like WDA, completely disclose the whole structure, while others, like KVC-QA, only reveal a subset of nodes to the user. The strategies discussed in \cref{subs:privacy} act on a different level, not changing the information that is disclosed during a query, but rather restricting the information that will be inserted in the q-digest at its creation.\\
A true ``zero-knowledge'' q-digest, where the answer to quantile queries reveals no information on the structure itself, seems contradictory, as the answer itself needs to convey some information on the underlying distribution. Due to the nature of quantile queries, in fact, with a sufficient number of them it is possible to completely extract the distribution (since it is discrete), and in turn to construct another q-digest that is identical to the source.

Despite that, it might still be desirable to devise an authentication method that discloses the least amount of information possible. Ideally, only the answer to the queries should be passed to the user, with no information on the internal structure.

\section*{Conclusion}
With this paper, we have described two authentication techniques, each with its advantages and disadvantages. We have also outlined some existing issues in the data structure, and presented solutions.

We believe that these results will be useful solutions to the initial problem of providing an authenticated, but compact, data structure. We also believe that the work done in this paper will provide a good groundwork and important insight that will be helpful to improve and expand on these techniques.

\printbibliography

\end{document}